\documentclass[oneside,english]{amsart}
\usepackage[T1]{fontenc}
\usepackage[utf8]{inputenc}
\usepackage{amsthm}
\usepackage{geometry}
\makeatletter
\numberwithin{equation}{section}
\numberwithin{figure}{section}

\makeatother
\theoremstyle{plain}
\newtheorem{thm}{\protect\theoremname}
\newtheorem{lem}[thm]{\protect\lemmaname}
\newtheorem{prop}[thm]{\protect\propositionname}
\usepackage{babel}
\providecommand{\lemmaname}{Lemma}
\providecommand{\propositionname}{Proposition}
\providecommand{\theoremname}{Theorem}

\begin{document}

\title{Scaling Symmetry in Symplectic Thermodynamics}

\author{Mario C. Baldiotti 
 $^{1}$ and Rodrigo Fresneda $^{2,}$*}

\address{$^{1}$ \quad Department of Physics, Londrina State University, Londrina, Brazil, 86.057-970; baldiotti@uel.br
\\
$^{2}$ \quad Center for Mathematics, Computing and Cognition, Federal University of ABC, Santo André, Brazil, 09280-560}

\begin{abstract}This paper investigates scaling symmetry in thermodynamics by unifying constrained Hamiltonian dynamics with symplectic and contact geometries. Through the mathematical processes of contactization and symplectization, we demonstrate that fixing an extended global scale variable effectively recovers the standard thermodynamic description in terms of scale-invariant quantities. The geometric formalism is illustrated by establishing the diffeomorphism between the Lagrangian submanifolds of ideal and van der Waals gases. Finally, applying this framework to a Schwarzschild black hole reveals that {changing the scaling weights of entropy and internal energy} is a fundamental physical requirement to accommodate non-isothermal dynamics.\end{abstract}

\maketitle


\section{Introduction}

The geometric formulation of {equilibrium} thermodynamics has conventionally been grounded in the framework of contact manifolds~\cite{brav1}. However, alternative perspectives based on symplectic geometry and constrained Hamiltonian mechanics offer powerful tools for describing equilibrium systems. In~earlier works~\cite{baldiotti2016,baldiotti2017}, it was established that thermodynamic equations of state can be successfully modeled as primary constraints in a phase space, defining a constraint surface on which the tautological one-form equates to the differential of the internal energy. Because~any curve connecting two equilibrium states represents a valid idealized quasi-static process, the~system lacks physical degrees of freedom. Within~this symplectic perspective, an~equilibrium thermodynamic system resides on an $n$-dimensional Lagrangian submanifold embedded in a $2n$-dimensional symplectic phase~space.

 {The role of symmetries in the symplectic structure of thermodynamics, as~well as the relation of these symmetries to gauge transformations, have been considered in many contexts~\cite{balian2001,bravetti2015,bravetti2015b,quevedo2022,bravetti2023,gosh2026}}. This paper aims to investigate gauge fixing in thermodynamics from the standpoint of a symplectic framework based on the work in {\cite{baldiotti2016}} and to connect it with the standard contact-geometric approach. To~bridge these formalisms, we detail the mathematical operations of contactization and symplectization. Through contactization, the~$2n$-dimensional symplectic manifold is lifted to a $\left(2n+1\right)$-dimensional contact manifold, wherein the Lagrangian submanifolds are mapped to Legendre submanifolds that naturally encode the fundamental thermodynamic relations. Conversely, the~symplectization of the contact manifold yields a $\left(2n+2\right)$-dimensional symplectic space by formally introducing an extended global scale~variable.

To illustrate the validity and physical scope of this geometric development, we apply the formalism to classical thermodynamic systems, explicitly demonstrating the diffeomorphisms between the Lagrangian submanifolds of the ideal and van der Waals gases. Within~the symplectization framework, we demonstrate how a thermodynamic system originally described by scale-invariant quantities in a $\left(2n+1\right)$-dimensional contact manifold can be formally embedded into an extended symplectic manifold. This is achieved by introducing a global scale variable and its conjugate, yielding a description in terms of homogeneous quantities of arbitrary degree. By~applying a gauge-fixing procedure---specifically, holding this scale variable constant---we effectively take the quotient of the extended symplectic manifold by the orbits of the associated dilation generator, thereby recovering the original contact geometry and the scale-invariant thermodynamic formulation. Furthermore, we construct a Hamiltonian flow that generates scale-invariant quasi-static processes, which naturally project down to the equilibrium Legendre submanifold. This extended formalism is subsequently applied to the thermodynamics of a Schwarzschild black hole, illustrating that {changing the energy and entropy scaling weights} is geometrically and physically necessary to accommodate non-isothermal~dynamics.

First, we review the authors' earlier work on the Hamiltonian formalism and~then connect that development with the standard contact-geometric approach. In~Section~\ref{sec:Symplectic-approach},~we establish the connection between our earlier work~\cite{baldiotti2016} and symplectic geometry. In~Section~\ref{sec:Contactization}, we show how to transition from the symplectic description to the contact description while preserving the thermodynamic formulation. 
In~both approaches, we consider the ideal and van der Waals gases. In~the first case, we show there is a symplectomorphism connecting the two physical systems, while in the second case, there is a contactomorphism between them. Finally, in~Section~\ref{sec:Symplectization} we define symplectization and interpret the extra dimension as a global scaling gauge freedom. We apply the formalism to the description of the Schwarzschild black~hole.

{
\subsection*{Notation and Conventions}

For clarity and ease of reading, we summarize here the principal mathematical symbols and conventions used throughout this manuscript to describe the symplectic and contact geometric formulations of thermodynamics:

\paragraph{Spaces and Submanifolds}
\begin{itemize}
	\item $M$: The symplectic manifold.
	\item $\mathcal{C}$: The constraint submanifold.
	\item $C$: The contact manifold.
	\item $L$: The Lagrangian submanifold of $M$.
	\item $\Lambda$: The Legendre submanifold of $C$.
\end{itemize}

\paragraph{Coordinates and Variables}
\begin{itemize}
	\item $q^i$: The scale-invariant densities of extensive thermodynamic variables (e.g., volume, particle number).
	\item {$p_i$: The intensive thermodynamic conjugate variables (e.g., pressure, chemical potential).}
	\item $u$: The scale-invariant internal energy of the system.
	\item $U$: The scale-dependent internal energy of the system.
	\item $Z$: The extended coordinate representing the global absolute scale of the system.
	\item $r_i,r,\rho$: The characteristic scaling weights associated with the variables $Q^i,U,Z$.
	\item $Z_0$: A strictly positive constant ($Z_0 > 0$) defining the gauge-fixing section.
	\item $\tau$: An arbitrary, dimensionless affine parameter.
	\item $t$: Real symplectization parameter.
\end{itemize}

\paragraph{Functions and Generators}
\begin{itemize}
	\item $H$: The homogeneous Hamiltonian function defined on the symplectic manifold $M$.
	\item $h$: The contact Hamiltonian function defined on the contact manifold $C$.
	\item $G$: The Noether charge associated with the continuous dilation (scaling) symmetry.
\end{itemize}

\paragraph{Vector Fields and Flows}
\begin{itemize}
	\item $X_r$: The canonical conformal vector field generating the gauge orbits (continuous dilations) in $M$.
	\item $X_G$: The Hamiltonian vector field associated with the Noether charge $G$, generating volume-preserving geometric scaling flows.
	\item $X_H$: The Hamiltonian vector field of $H$, generating the extended thermodynamic dynamics in $M$.
	\item $X_h$: The contact vector field of $h$, generating the physical, quasi-static thermodynamic processes on $C$.
\end{itemize}

}

\section{Constrained Dynamics~Formalism}
\label{sec:Constrained-Dynamics-Formalism}

Let us recall the Hamiltonian approach developed in~\cite{baldiotti2016}. Let us focus on a simple thermodynamic system whose equations of state are given in the energy representation: 
\begin{equation}
T=T\left(S,V,N\right)~,\ P=P\left(S,V,N\right)~,\ \mu=\mu\left(S,V,N\right),\label{eq:equations-state}
\end{equation}
where the extensive parameters are entropy $\left(S\right)$, volume $\left(V\right)$, particle or mole number $\left(N\right)$, while the intensive parameters are temperature $\left(T\right)$, pressure $\left(P\right)$, and~chemical potential $\left(\mu\right)$. To~incorporate the scale invariance of the formalism, it is common to work with scale-invariant quantities, that is, intensive quantities that represent densities of extensive variables. We note that any extensive quantities can be used to define these densities; for~example, the~scale-invariant quantities can be given per volume, per unit entropy, etc. The~most common choice is to use per-particle (or per-mole) ratios,
\begin{equation}
s=\frac{S}{N}~,\ v=\frac{V}{N}.\label{N}
\end{equation}

We first establish a dictionary between thermodynamical variables and coordinates $\left(q,p\right)$ in phase-space, where coordinates $q$ and momenta $p$ are:
\begin{equation}
q^{1}=s\,,\,q^{2}=v,~p_{1}=T\,,\,p_{2}=-P.\label{eq:thermodic}
\end{equation}
In 
 this way one defines the tautological one-form as $\theta=\sum_{i=1}^{2}p_{i}dq^{i}$ and the canonical symplectic form as $\omega=-d\theta$, which in local coordinates defines the Poisson brackets between functions $f$ and $g$ on phase space:
\begin{equation}
\left\{ f,g\right\} =\sum_{i=1}^{n}\frac{\partial f}{\partial q^{i}}\frac{\partial g}{\partial p_{i}}-\frac{\partial g}{\partial q^{i}}\frac{\partial f}{\partial p_{i}}\,,\,\,n=2,
\end{equation}
which can be trivially extended to $n>2$ to describe a more complex system (e.g., charged or multi component). The~equations of state \eqref{eq:equations-state} are therefore realized as primary constraints
\begin{equation}
p_{i}=f_{i}\left(q\right)\Leftrightarrow\phi_{i}\left(q,p\right)=p_{i}-f_{i}\left(q\right)\,,\,i=1,...,n.\label{eq:state-constraints}
\end{equation}
On the constraint surface $\phi_{i}=0$, the~tautological form $\theta$ is the differential of the internal energy $du$:
\begin{equation}
\left.\theta\right\vert _{\phi=0}=\sum_{i=1}^{2}p_{i}\left(q\right)dq^{i}=Tds-Pdv\equiv du.
\end{equation}

Taking into account the thermodynamic definition of intensive parameters in the chosen representation, $p_{i}=\frac{\partial u}{\partial q^{i}}\left(q\right)$, it follows that the tautological form is closed on the constraint~surface:
\begin{equation}
\left.d\theta\right\vert _{\phi=0}=\sum_{i,j=1}^{2}\frac{\partial p_{i}}{\partial q^{j}}\left(q\right)dq^{j}\wedge dq^{i}=\sum_{i,j=1}^{2}\frac{\partial^{2}u}{\partial q^{i}\partial q^{j}}dq^{i}\wedge dq^{j}\equiv0.
\end{equation}

In this symplectic setting, the~Legendre transformations between thermodynamic potentials become canonical transformations. For~instance, the~canonical transformation $\left(q^{1},p_{1}\right)\mapsto\left(q^{\prime},p^{\prime}\right)=\left(-p_{1},q^{1}\right)$ gives the tautological form
\begin{equation}
\theta^{\prime}=p^{\prime}dq^{\prime}+p_{2}q^{2}.
\end{equation}

The difference $\theta-\theta^{\prime}=d\left(q^{1}p_{1}\right)$ is a closed form, as~expected. On~the constraint surface one~recognizes
\begin{equation}
\left.\theta^{\prime}\right\vert _{\phi=0}=d\left(u-Ts\right),
\end{equation}
as the differential of the scale-invariant Helmholtz potential $f\left(T,v\right)=u-Ts$. Since two equilibrium states (with identical composition) of any thermodynamic system can be connected by some possible mechanical process, it follows that any curve connecting these states is a valid solution to the equations of motion of the mechanical analog. In~other words, the~action has an extremum for any curve connecting equilibrium states, so there are no physical degrees of freedom. As~we showed in~\cite{baldiotti2016}, this entails that either all $n$ constraints are first-class, or~there are $r$ first-class constraints and $k$ second-class constraints such that $n=r+k/2$. In~either case, the~Lagrangian is a total derivative, as~should be expected from the fact that action has an extremum for any trajectory connecting equilibrium states. Therefore, a~trajectory in configuration space corresponds to the idealized notion of quasi-static processes in equilibrium thermodynamics. As~a result of the absence of physical degrees of freedom, the~canonical Hamiltonian $H_{c}$ is proportional to constraints, \mbox{$H_{c}=\Sigma_{i=1}^{n}\lambda_{i}\phi_{i}$}, and~therefore, the~Lagrange function is first degree homogeneous in the velocities, $L\left(q,\lambda\dot{q}\right)=\lambda L\left(q,\dot{q}\right)$. Moreover, as~we have shown previously, $L=\frac{d\phi}{d{\tau}}$, where $p_{i}=\frac{\partial u}{\partial q^{i}}\left(q\right)$. Taking into account the thermodynamic definition of intensive parameters, $p_{i}=\frac{\partial u}{\partial q^{i}}\left(q\right)$, it follows that $L=\frac{du}{d{\tau}}$, that is, the~Lagrange function is the total derivative of the internal energy {with respect to the affine parameter $\tau$}. Thus, the~action functional is just the difference between the values of the internal energy of the initial and final states, which is not surprising in the light of the discussion of the previous~paragraph.

\section{Symplectic~Approach}
\label{sec:Symplectic-approach}

Let $M$ be a symplectic manifold of dimension $2n$ with symplectic form $\omega$. Consider a submanifold of dimension $n$ given by the zero level of the independent functions $\phi_{1},...,\phi_{n}:M\rightarrow\mathbb{R}$,
\begin{equation}
\mathcal{C}=\left\{ x\in M,\,\phi_{i}\left(x\right)=0\,,i=1,...,n\right\},
\end{equation}
which are involutive on $\mathcal{C}$, i.e.,
\begin{equation}
\left.\left\{ \phi_{i},\phi_{j}\right\} \right\vert _{\mathcal{C}}=0\,,\,\,\forall i,j.
\end{equation}

We will show that $\mathcal{C}$ is Lagrangian, that is, it is an $n$-dimensional isotropic submanifold of $M$ (see~\cite{abraham-marsden}). Since $\omega$ is non-degenerate, for~each $\phi_{i}$ there is a vector field $X_{\phi_{i}}\in TM$ defined~by
\begin{equation}
i_{X_{\phi_{i}}}\omega=d\phi_{i}\Leftrightarrow d\phi_{i}\left(Y\right)=\omega\left(X_{\phi_{i}},Y\right)\,\,\forall Y\in TM.
\end{equation}
In particular,
\begin{equation}
d\phi_{i}\left(X_{\phi_{j}}\right)=\omega\left(X_{\phi_{i}},X_{\phi_{j}}\right)=\left\{ \phi_{i},\phi_{j}\right\},
\end{equation}
are the Poisson brackets between the functions $\phi_{i}$ and $\phi_{j}$. Now consider a vector $v$ tangent to the constraint surface $\mathcal{C}$ at point $x$, $v\in T_{x}\mathcal{C}$. Then the directional derivative of $\phi_{i}$ in the direction of $v$ vanishes, $d\phi_{i}\left(v\right)=0$, since $\phi_{i}=0$ is constant in $\mathcal{C}$. That is,
\begin{equation}
\omega\left(X_{\phi_{i}},v\right)=0\,\,\forall v\in T_{x}M.
\end{equation}

We say that $X_{\phi_{i}}$ belongs to the symplectic complement of $T_{x}\mathcal{C}$, $X_{\phi_{i}}\in\left(T_{x}\mathcal{C}\right)^{\omega}$. In~fact, it is true that $\left(T_{x}\mathcal{C}\right)^{\omega}=\mathrm{span}\left\{ X_{\phi_{1}},...,X_{\phi_{n}}\right\}$.

\begin{lem}
If $\mathcal{C}$ is Lagrangian, then the tangent space at each point is equal to its symplectic complement, $T_{x}\mathcal{C}=\left(T_{x}\mathcal{C}\right)^{\omega}$, $x\in\mathcal{C}$.
\label{lem:lemma1} 
\end{lem}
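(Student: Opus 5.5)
The plan is to show both inclusions using only linear algebra on the $2n$-dimensional symplectic vector space $(T_xM,\omega_x)$. Fix $x\in\mathcal{C}$ and write $W=T_x\mathcal{C}$. By the definition recalled just before the statement, $\mathcal{C}$ Lagrangian means $\dim W=n$ and $W$ is isotropic, i.e. $\omega_x(v,w)=0$ for all $v,w\in W$.

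The first inclusion is immediate. Isotropy says that every $v\in W$ annihilates all of $W$ under $\omega_x$. By the definition of the symplectic complement, $W^{\omega}=\{u\in T_xM:\ \omega_x(u,w)=0\ \forall w\in W\}$, this gives $W\subseteq W^{\omega}$.

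For the reverse inclusion I will compare dimensions. The key fact is $\dim W+\dim W^{\omega}=\dim T_xM=2n$. To prove it, consider the map $\flat:T_xM\to T_x^{*}M$, $u\mapsto i_u\omega_x$. Nondegeneracy of $\omega$ makes $\flat$ an isomorphism; this is the same fact the paper uses to define $X_{\phi_i}$. Under $\flat$, the subspace $W^{\omega}$ is carried exactly onto the annihilator $W^{0}\subset T_x^{*}M$. Since $\dim W^{0}=2n-\dim W$, we get $\dim W^{\omega}=2n-n=n$.

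Thus $W\subseteq W^{\omega}$ with both of dimension $n$, and the two spaces are equal. This is the main (mild) step: one has to make sure nondegeneracy is used correctly, so that $\flat$ identifies $W^{\omega}$ with the annihilator of $W$.

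Alternatively, in the present setting one can use the paper's remark that $W^{\omega}=\mathrm{span}\{X_{\phi_1},\dots,X_{\phi_n}\}$. Independence of the $d\phi_i$ and nondegeneracy give that these $n$ vectors are linearly independent, which yields the same dimension count. I would present the annihilator argument, since it does not rely on that unproved remark.
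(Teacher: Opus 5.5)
Your proposal is correct and follows the same route as the paper: isotropy gives $T_x\mathcal{C}\subseteq (T_x\mathcal{C})^{\omega}$, and the dimension formula $\dim T_x\mathcal{C}+\dim (T_x\mathcal{C})^{\omega}=2n$, which comes from nondegeneracy, then forces equality. The only difference is that you prove the dimension formula, using the isomorphism $u\mapsto i_u\omega_x$ that carries $(T_x\mathcal{C})^{\omega}$ onto the annihilator of $T_x\mathcal{C}$, whereas the paper simply asserts it.
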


\begin{proof}
First, we show that $T_{x}\mathcal{C}\subset\left(T_{x}\mathcal{C}\right)^{\omega}$ (as a vector space). Since $\mathcal{C}$ is Lagrangian, given $X\in T_{x}\mathcal{C}$, $\left.\omega\right\vert _{\mathcal{C}}\left(X,Y\right)=0$ for all $Y\in T_{x}\mathcal{C}$, that is, $X\in\left(T_{x}\mathcal{C}\right)^{\omega}$. On~the other hand, since $\omega$ is symplectic (non-degenerate), one has
\begin{equation}
\dim\left(T_{x}\mathcal{C}\right)^{\omega}+\dim T_{x}\mathcal{C}=\dim T_{x}M=2n~.
\end{equation}
Since $\dim T_{x}\mathcal{C}=n$, it follows that $\dim\left(T_{x}\mathcal{C}\right)^{\omega}=n$. Then, $T_{x}\mathcal{C}=\left(T_{x}\mathcal{C}\right)^{\omega}$. 
\end{proof}

\begin{thm}
$\mathcal{C}$ is Lagrangian if and only if the constraints $\phi_{i}$ are involutive, $\left.\left\{ \phi_{i},\phi_{j}\right\} \right\vert _{\mathcal{C}}=0$.
\label{thm:-is-Lagrangian} 
\end{thm}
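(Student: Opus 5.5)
We will prove the two implications separately, relying on two facts already established above. The first is the identity $d\phi_i(X_{\phi_j})=\omega(X_{\phi_i},X_{\phi_j})=\{\phi_i,\phi_j\}$. The second is the observation that each Hamiltonian vector field $X_{\phi_i}$ lies in the symplectic complement $(T_x\mathcal{C})^{\omega}$. Throughout we use that $\mathcal{C}$ has dimension $n$ and that the $\phi_i$ are independent, so the differentials $d\phi_1,\dots,d\phi_n$ are linearly independent at each $x\in\mathcal{C}$.

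The first step is to make precise the claim that $(T_x\mathcal{C})^{\omega}=\mathrm{span}\{X_{\phi_1},\dots,X_{\phi_n}\}$. Since $\omega$ is non-degenerate, the map $d\phi\mapsto X_\phi$ is a linear isomorphism $T_x^*M\to T_xM$. It therefore sends the $n$ independent covectors $d\phi_i$ to $n$ independent vectors $X_{\phi_i}$. These vectors lie in $(T_x\mathcal{C})^{\omega}$. By the dimension formula already used in Lemma~\ref{lem:lemma1}, $\dim (T_x\mathcal{C})^{\omega}=2n-n=n$, so they span the complement.

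For the direction ($\Rightarrow$), assume $\mathcal{C}$ is Lagrangian. Lemma~\ref{lem:lemma1} gives $T_x\mathcal{C}=(T_x\mathcal{C})^{\omega}$, so every $X_{\phi_j}$ is tangent to $\mathcal{C}$ at $x$. The constraints are constant on $\mathcal{C}$, so $d\phi_i(X_{\phi_j})=0$ there. Hence $\{\phi_i,\phi_j\}|_{\mathcal{C}}=0$.

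For the direction ($\Leftarrow$), assume involutivity. Then $d\phi_i(X_{\phi_j})=0$ on $\mathcal{C}$ for all $i$ and $j$, so each $X_{\phi_j}$ lies in $\ker d\phi_1\cap\dots\cap\ker d\phi_n$. By the regular value theorem this intersection is exactly $T_x\mathcal{C}$. It follows that $(T_x\mathcal{C})^{\omega}=\mathrm{span}\{X_{\phi_j}\}\subset T_x\mathcal{C}$. Both spaces have dimension $n$, so they are equal. In particular, for any $v,w\in T_x\mathcal{C}$ we have $v\in(T_x\mathcal{C})^{\omega}$, and so $\omega(v,w)=0$. Thus $\omega|_{\mathcal{C}}=0$, and since $\dim\mathcal{C}=n$, $\mathcal{C}$ is Lagrangian.

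The main obstacle is the identification of $T_x\mathcal{C}$ and $(T_x\mathcal{C})^{\omega}$ with the spaces described above. This is where the independence of the $\phi_i$, the regular value theorem, and the non-degeneracy of $\omega$ are genuinely needed. The remaining steps are direct bracket manipulations.
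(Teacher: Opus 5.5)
Your proof is correct and follows essentially the same route as the paper: in both directions the pivot is that involutivity is equivalent to the $n$ independent fields $X_{\phi_i}$ being tangent to $\mathcal{C}$, so that they span $T_x\mathcal{C}$. The only cosmetic difference is that in the converse you deduce isotropy from $(T_x\mathcal{C})^{\omega}=T_x\mathcal{C}$, whereas the paper evaluates $\omega$ on the basis $X_{\phi_i}$, where it reduces to the vanishing brackets; you also supply justifications the paper leaves implicit, namely the independence of the $X_{\phi_i}$, the identity $T_x\mathcal{C}=\bigcap_i\ker d\phi_i$, and $(T_x\mathcal{C})^{\omega}=\mathrm{span}\{X_{\phi_1},\dots,X_{\phi_n}\}$.
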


\begin{proof}
By Lemma \ref{lem:lemma1}, we have $T_{x}\mathcal{C}=\left(T_{x}\mathcal{C}\right)^{\omega}$. Therefore, the~fields $X_{\phi_{i}}$ are tangent to the constraint surface. Since $\mathcal{C}$ is Lagrangian, $\left.\omega\right\vert _{\mathcal{C}}=0$. Thus, $\left.\omega\right\vert _{\mathcal{C}}\left(X_{\phi_{i}},X_{\phi_{j}}\right)=0$ and it follows that $\left.\left\{ \phi_{i},\phi_{j}\right\} \right\vert _{\mathcal{C}}=0$. If~on the other hand the constraints are involutive in $\mathcal{C}$, $\left.\left\{ \phi_{i},\phi_{j}\right\} \right\vert _{\mathcal{C}}=0$, then it follows that the fields $X_{\phi_{i}}$ are tangent, because~they preserve the constraint surface,
\begin{equation}
X_{\phi_{i}}\left(\phi_{j}\right)=d\phi_{j}\left(X_{\phi_{i}}\right)=\left.\omega\right\vert _{\mathcal{C}}\left(X_{\phi_{j}},X_{\phi_{i}}\right)=\left.\left\{ \phi_{i},\phi_{j}\right\} \right\vert _{\mathcal{C}}=0.
\end{equation}
That is, $X_{\phi_{i}}\in T_{x}\mathcal{C}$. As~the $X_{\phi_{i}}$ are linearly independent at each point $x\in\mathcal{C}$, they constitute a basis for the tangent space $T_{x}\mathcal{C}$. Therefore, $\dim\mathcal{C}=n=\frac{1}{2}\dim M$ and $\left.\omega\right\vert _{\mathcal{C}}=0$, that is, $\mathcal{C}$~is Lagrangian. 
\end{proof}

\subsection*{Ideal and van der Waals~Gases}

In this subsection, we use the thermodynamic dictionary from Section~\ref{sec:Constrained-Dynamics-Formalism} given in (\ref{eq:thermodic}). The~constraint surface for the ideal gas is described by the constraints:
\begin{equation}
\phi_{1}=p_{2}+Ae^{\frac{2}{3}q^{1}}\left(q^{2}\right)^{-\frac{5}{3}}=0\,,\,\,\phi_{2}=p_{1}-Ae^{\frac{2}{3}q^{1}}\left(q^{2}\right)^{-\frac{2}{3}}=0,
\end{equation}
{where $A$ is an integration constant (see~\cite{baldiotti2016}). They are given} on the symplectic manifold $M=\mathbb{R}^{4}$ given in a local Darboux chart $\left(q^{i},p_{i}\right)$ and symplectic form { $\omega=\sum_{i=1}^{2} dq^{i}\wedge dp_{i} $} for which $\left\{ \phi_{1},\phi_{2}\right\} =0.$ That is, the~fields
\begin{align}
X_{\phi_{1}} & =-\frac{\partial}{\partial q^{2}}-\frac{5}{3}Ae^{\frac{2}{3}q^{1}}\left(q^{2}\right)^{-\frac{8}{3}}\frac{\partial}{\partial p_{2}}+\frac{2}{3}Ae^{\frac{2}{3}q^{1}}q^{-\frac{5}{3}}\frac{\partial}{\partial p_{1}}~,\\
X_{\phi_{2}} & =-\frac{\partial}{\partial q^{1}}-\frac{2}{3}Ae^{\frac{2}{3}q^{1}}\left(q^{2}\right)^{-\frac{2}{3}}\frac{\partial}{\partial p_{1}}+\frac{2}{3}Ae^{\frac{2}{3}q^{1}}\left(q^{2}\right)^{-\frac{5}{3}}\frac{\partial}{\partial p_{2}},
\end{align}
are involutive $\omega\left(X_{\phi_{1}},X_{\phi_{2}}\right)\equiv0$. Since the constraints are first-class (involutive), the~constraint surface given by $\phi_{1}=\phi_{2}=0$ is a Lagrange submanifold (Theorem \ref{thm:-is-Lagrangian}). Note that the isotropy condition $\omega=0$ on the constraint surface is equivalent to the Maxwell relation
\begin{equation}
-\frac{\partial\phi_{2}}{\partial q^{2}}+\frac{\partial\phi_{1}}{\partial q^{1}}=0\Leftrightarrow-\frac{\partial p_{1}\left(q\right)}{\partial q^{2}}+\frac{\partial p_{2}\left(q\right)}{\partial q^{1}}\Leftrightarrow  {\frac{\partial T}{\partial v}=-\frac{\partial P}{\partial s}}.
\end{equation}

Thus, the~Lagrange submanifold which describes the ideal gas is
\begin{equation}
L_{IG}=\left\{ \left(q,p\right)\in\mathbb{R}^{4},\,\,\phi_{1}=\phi_{2}=0\right\}.
\end{equation}

In $L_{IG}$, the~canonical form $\theta=\sum_{i=1}^{2}p_{i}dq^{i}$ is
\begin{equation}
\left.\theta\right\vert _{L_{IG}}=d\left(\frac{3}{2}Ae^{\frac{2}{3}q^{1}}\left(q^{2}\right)^{-\frac{2}{3}}\right)=du,
\end{equation}
where $u=\frac{3}{2}Ae^{\frac{2}{3}q^{1}}\left(q^{2}\right)^{-\frac{2}{3}}$ is the internal energy of the ideal gas. Consider the transformation $\psi:\mathbb{R}^{4}\rightarrow\mathbb{R}^{4}$, $\psi\left(q^{\prime},p^{\prime}\right)=\left(q,p\right)$, given by
\begin{equation}
q^{2}=q^{\prime2}-b\,,\,\,p_{2}=p_{2}^{\prime}-\left(q^{\prime2}\right)^{-2}\,,\,\,p_{1}=p_{1}^{\prime}\,,\,\,q^{1}=q^{\prime1},
\end{equation}
which is a symplectomorphism $\omega^{\prime}=\psi^{\ast}\omega$. The~pullback of the constraints
\begin{eqnarray}
\phi_{1}^{\prime} & = & \psi^{\ast}\phi_{1}=p_{2}^{\prime}-\left(q^{\prime2}\right)^{-2}+Ae^{\frac{2}{3}q^{\prime1}}\left(q^{\prime2}-b\right)^{-5/3},\nonumber \\
\phi_{2}^{\prime} & = & \psi^{\ast}\phi_{2}=p_{1}^{\prime}-Ae^{\frac{2}{3}q^{\prime1}}\left(q^{\prime2}-b\right)^{-\frac{2}{3}},
\end{eqnarray}
define the Lagrange submanifold
\begin{equation}
L_{WG}=\left\{ \left(q^{\prime},p^{\prime}\right)\in\mathbb{R}^{4},\,\,\phi_{1}^{\prime}=\phi_{2}^{\prime}=0\right\}.
\end{equation}

In $L_{WG}$,
\begin{equation}
\left.\theta^{\prime}\right\vert _{{L_{WG}}}=d\left(\frac{3}{2}Ae^{\frac{2}{3}q^{\prime1}}\left(q^{\prime2}-b\right)^{-\frac{2}{3}}\right)=du^{\prime},
\end{equation}
where $u^{\prime}=\frac{3}{2}Ae^{\frac{2}{3}q^{\prime1}}\left(q^{\prime2}-b\right)^{-\frac{2}{3}}$ is the internal energy of the van der Waals gas. Let $TL_{IG}$ be the tangent manifold to $L_{IG}$, then a local basis of fields is given by
\begin{equation}
\left\{ e_{1}=\frac{\partial}{\partial q^{1}}+\frac{\partial p_{2}}{\partial q^{1}}\frac{\partial}{\partial p_{2}}+\frac{\partial p_{1}}{\partial q^{1}}\frac{\partial}{\partial p_{1}},\,e_{2}=\frac{\partial}{\partial q^{2}}+\frac{\partial p_{2}}{\partial q^{2}}\frac{\partial}{\partial p_{2}}+\frac{\partial p_{1}}{\partial q^{2}}\frac{\partial}{\partial p_{1}}\right\}.
\end{equation}

Analogously, a~local basis of fields for $TL_{WG}$ is
\begin{equation}
\left\{ e_{1}^{\prime}=\frac{\partial}{\partial q^{\prime1}}+\frac{\partial p_{1}^{\prime}}{\partial q^{\prime1}}\frac{\partial}{\partial p_{2}^{\prime}}+\frac{\partial p_{1}^{\prime}}{\partial q^{\prime1}}\frac{\partial}{\partial p_{1}^{\prime}},\,e_{2}^{\prime}=\frac{\partial}{\partial q^{\prime2}}+\frac{\partial p_{2}^{\prime}}{\partial q^{\prime2}}\frac{\partial}{\partial p_{2}^{\prime}}+\frac{\partial p_{1}^{\prime}}{\partial q^{\prime2}}\frac{\partial}{\partial p_{1}^{\prime}}\right\}.
\end{equation}

Let us call $\phi:L_{WG}\rightarrow L_{IG}$ the restriction of $\psi$ to the submanifold {$L_{WG}$}. Explicitly,
\begin{equation}
\phi\left(q^{\prime},p^{\prime}\left(q^{\prime}\right)\right)=\left(q\left(q^{\prime}\right),p\left(q^{\prime}\right)\right),
\end{equation}
where
\begin{equation}
p_{1}\left(q^{\prime}\right)=Ae^{\frac{2}{3}q^{\prime1}}\left(q^{\prime2}-b\right)^{-\frac{2}{3}}\,,\,\,p_{2}\left(q^{\prime}\right)=-Ae^{\frac{2}{3}q^{\prime1}}\left(q^{\prime2}-b\right)^{-5/3}.
\end{equation}

The tangent map $d\phi$ at the basis vectors of $TL_{WG}$ maps $d\phi\left(e_{i}^{\prime}\right)=e_{i}$, $i=1,2$. Therefore $\left[d\phi\right]=I_{2}$ is the identity map and has constant rank. This shows that the two Lagrangian submanifolds are diffeomorphic, which can be seen as a consequence of Weinstein's theorem~\cite{abraham-marsden,weinstein,cannas}.

While the result $[d\phi] = I_2$ might appear trivial from a purely mathematical standpoint, being a natural consequence of Weinstein's theorem for nearby Lagrangian submanifolds, the~explicit construction of $\phi$ carries physical content. Mathematically, Weinstein's theorem guarantees only the local existence  of such a mapping. Physically, however, obtaining the explicit symplectomorphism provides the exact generating function that encodes macroscopic intermolecular interactions. The~fact that the tangent map reduces to the identity matrix perfectly captures the asymptotic physical limit of the system: at large volumes, the~non-linear interaction terms vanish, and~the van der Waals gas smoothly converges to the ideal gas. Furthermore, providing the explicit non-linear form of $\phi$ allows us to probe the global geometric properties of the state space, including~folds in the Lagrangian submanifold that physically correspond to phase transitions and critical phenomena.

The explicit construction of this mapping naturally raises the question of its global validity: under precisely which geometric and physical criteria can two distinct macroscopic thermodynamic systems be related by a global symplectomorphism? Because the Lagrangian submanifold is globally defined as the graph of the differential of the internal energy over the extensive variables, its intrinsic geometry remains smooth. Therefore, phase transitions, which manifest as caustic singularities only upon projection via the Legendre transform (e.g., the~spinodal curve), do not inherently obstruct the existence of a global smooth map. We note that, in the above example, there is no global topological obstruction, that is, the~physical domains of the extensive variables must be diffeomorphic. While the shifted volume domain of the van der Waals gas is diffeomorphic to the volume domain of the ideal gas, a~global map would fail if one attempted to relate an unbounded gas to a system with a compact state space, such as a paramagnet.

 In general, one has

\begin{thm}
Let $\phi:M\rightarrow M^{\prime}$ be a symplectomorphism, i.e.,~$\phi^{*}\omega^{\prime}=\omega$. If~$L^{\prime}$ is a Lagrange submanifold of $M^{\prime}$, then $\phi^{-1}\left(L\right)$ is a Lagrange submanifold of $M$.
\end{thm}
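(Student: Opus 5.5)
The claim should read $\phi^{-1}(L^{\prime})$; the plan is to verify the two defining properties of a Lagrangian submanifold directly: half dimension and vanishing of the restricted symplectic form. Since a symplectomorphism is in particular a diffeomorphism, $\dim M=\dim M^{\prime}=2n$ and $\phi$ has a smooth inverse. Thus $L:=\phi^{-1}(L^{\prime})=\phi^{-1}{}^{\,}$ applied to an embedded $n$-dimensional submanifold is again an embedded submanifold of dimension $n=\tfrac12\dim M$. The restriction $\phi|_{L}:L\rightarrow L^{\prime}$ is a diffeomorphism, with inverse $\phi^{-1}|_{L^{\prime}}$. Consequently, for each $x\in L$ the tangent map gives a linear isomorphism
\[
d\phi_{x}:T_{x}L\rightarrow T_{\phi(x)}L^{\prime}.
\]

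Next we check isotropy. Let $\iota:L\hookrightarrow M$ and $\iota^{\prime}:L^{\prime}\hookrightarrow M^{\prime}$ be the inclusions, so that $\phi\circ\iota=\iota^{\prime}\circ\phi|_{L}$. Then
\[
\iota^{*}\omega=\iota^{*}\phi^{*}\omega^{\prime}=(\phi\circ\iota)^{*}\omega^{\prime}=(\phi|_{L})^{*}(\iota^{\prime})^{*}\omega^{\prime}=0,
\]
because $\left.\omega^{\prime}\right\vert_{L^{\prime}}=0$. Pointwise this is the statement that for $X,Y\in T_{x}L$,
\[
\omega(X,Y)=\omega^{\prime}\big(d\phi_{x}X,d\phi_{x}Y\big)=0,
\]
since $d\phi_{x}X$ and $d\phi_{x}Y$ lie in $T_{\phi(x)}L^{\prime}$.

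Combining the two steps, $L$ is an $n$-dimensional isotropic submanifold of the $2n$-dimensional manifold $M$, hence Lagrangian. Equivalently, Lemma~\ref{lem:lemma1} gives $T_{x}L=(T_{x}L)^{\omega}$, because $d\phi_{x}$ carries symplectic complements to symplectic complements.

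The only delicate point is the submanifold claim in the first step, namely that the preimage inherits the embedded-submanifold structure. This holds because $\phi$ is a global diffeomorphism and not merely a smooth map. If $\phi$ were only a local symplectomorphism, the argument would go through locally in the same way.
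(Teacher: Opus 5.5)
Your proof is correct and follows the paper's argument: a dimension count via the restricted diffeomorphism $\phi|_{L}:L\rightarrow L^{\prime}$, then the pullback chain $\iota^{*}\omega=(\phi\circ\iota)^{*}\omega^{\prime}=(\phi|_{L})^{*}(\iota^{\prime})^{*}\omega^{\prime}=0$. You are also right that the statement should read $\phi^{-1}(L^{\prime})$, and your remark on why the preimage is again a submanifold makes explicit a point the paper takes for granted.
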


\begin{proof}
Since $\phi$ is a symplectomorphism, $\dim M=\dim M^{\prime}=2n$. The~restriction of $\phi$ to $L$~is a diffeomorphism, then $\dim L=\dim L^{\prime}$. Since $L^{\prime}$ is Lagrangian, $\dim L=\dim L^{\prime}=n$. Let us show that $L$ is isotropic. Let $i^{\prime}:L^{\prime}\hookrightarrow M^{\prime}$ be the inclusion, then $i^{\prime\ast}\omega^{\prime}=0$. And~let ${\varphi}:L\rightarrow L^{\prime}$ be the diffeomorphism given by $\phi\circ i=i^{\prime}\circ{\varphi}$, in~which $i:L\hookrightarrow M$. Thus,
\begin{equation}
i^{\ast}\omega=i^{\ast}\phi^{\ast}\omega^{\prime}=\left(\phi\circ i\right)^{\ast}\omega^{\prime}=\left(i^{\prime}\circ{\varphi}\right)^{\ast}\omega^{\prime}={\varphi}^{\ast}i^{\prime\ast}\omega^{\prime}=0.
\end{equation}
Since ${\varphi}^{\ast}$ is an isomorphism, it follows that $i^{\ast}\omega=0$. 
\end{proof}

\section{Contactization}
\label{sec:Contactization}

The symplectic formalism can be naturally extended to the natural contact formalism of thermodynamics~\cite{brav1} by the process of contactization. Given a symplectic manifold $\left(M,\omega\right)$ of dimension $2n$, the~contactization of $\left(M,\omega\right)$ is the process of constructing the contact manifold $\left(C,\alpha\right)$ of dimension $2n+1$ where $M$ is the base manifold of a principal fiber bundle defined by $C$ such that $d\alpha=\pi^{*}\omega$, and~whose fibers are generated by the Reeb field $R$, $i_{R}\alpha=1$ \cite{cannas}. This can be easily understood in the case of the trivial or product bundle. Let $Q$ be a configuration manifold of dimension $n$ (for instance, the~configuration manifold of a thermodynamic system, locally given by its extensive variables). And~let $M=T^{\ast}Q$ be the cotangent fiber bundle, and~suppose that $\omega$ is exact, that is, $\omega=-d\theta$, where $\theta$ is the canonical (or tautological) $1$-form. The~contactization is the product $C=M\times\mathbb{R}$ and the contact form $\alpha\in\Omega^{1}\left(C\right)$ is globally defined by $\alpha=dz-\pi^{\ast}\theta$ where $z$ is a coordinate in $\mathbb{R}$ and $\pi:C\rightarrow M$ is the natural projection. Note that
\begin{equation}
d\alpha=-d\pi^{\ast}\theta=\pi^{\ast}\omega.
\end{equation}

In local coordinates, $\left(q^{i},p_{i},z\right)$, the~contact form is {$\alpha=dz-\sum_{i=1}^{n}p_{i}dq^{i}$} and {$d\alpha=-\sum_{i=1}^{n}dp_{i}\wedge dq^{i}=\omega.$} The projection is given by $\left(q^{i},p_{i},z\right)\mapsto\left(q^{i},p_{i}\right)$. There is a natural connection between Lagrangian submanifolds $L$ of the symplectic manifold $M$ and the Legendre submanifolds $\Lambda$ of the contact bundle $C$. Let $\Lambda$ be a Legendre submanifold, that is, $\dim\Lambda=n$ and $j^{*}\alpha=0$, where $j:\Lambda\hookrightarrow C$ is the inclusion~map.

\begin{thm}
Let $L=\pi\left(\Lambda\right)$. Then $L$ is an immersed submanifold in $M$ of dimension $n$, and~${\varphi}:\Lambda\rightarrow L$ is the diffeomorphism given locally by
\begin{equation}
\pi\circ j=i\circ{\varphi}
\end{equation}
where $i:L\hookrightarrow M$.
\end{thm}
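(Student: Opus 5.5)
The plan is to show that the restricted projection $\pi\circ j:\Lambda\rightarrow M$ is an immersion. Then $L=\pi(\Lambda)$ is an immersed $n$-dimensional submanifold, and $\varphi$ is the (local) diffeomorphism onto its image through which $\pi\circ j$ factors, $\pi\circ j=i\circ\varphi$.

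\textbf{Step 1: the kernel of $d\pi$.} Since $\pi:C=M\times\mathbb{R}\rightarrow M$ is the bundle projection, $\ker d\pi$ at each point is the one-dimensional fiber direction. In the coordinates $\left(q^{i},p_{i},z\right)$ this direction is spanned by $\partial/\partial z$, which is the Reeb field: $\alpha(\partial_z)=1$ and $i_{\partial_z}d\alpha=0$.

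\textbf{Step 2: the fiber direction is transverse to $\Lambda$.} Suppose $v\in T_{y}\Lambda$ satisfies $d\pi(v)=0$. Then $v=cR$ for some $c\in\mathbb{R}$. But $\Lambda$ is Legendre, so $j^{*}\alpha=0$ and hence
\begin{equation}
0=\alpha(v)=c\,\alpha(R)=c .
\end{equation}
Therefore $v=0$, so $d(\pi\circ j)_y$ is injective at every point and $\pi\circ j$ is an immersion of rank $n$.

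\textbf{Step 3: conclude.} By the constant rank theorem, every point of $\Lambda$ has a neighbourhood $U$ such that $\pi\circ j(U)$ is an embedded $n$-dimensional submanifold of $M$. Globally, $L=\pi(\Lambda)$ is the image of an injective-up-to-overlaps immersion, i.e.\ an immersed submanifold. Defining $\varphi:=\pi\circ j$ with codomain $L$ and writing $i:L\hookrightarrow M$ gives the factorization $\pi\circ j=i\circ\varphi$. Injectivity of $d\varphi$ together with $\dim\Lambda=\dim L=n$ makes $\varphi$ a local diffeomorphism, which is the ``locally'' in the statement.

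It is worth remarking, as a corollary in the spirit of the next results, that
\begin{equation}
\varphi^{*}i^{*}\omega=j^{*}\pi^{*}\omega=j^{*}d\alpha=d\,j^{*}\alpha=0 ,
\end{equation}
so $L$ is in fact Lagrangian.

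\textbf{Main obstacle: global injectivity.} The delicate point is that two points of $\Lambda$ differing only in $z$ could project to the same point of $M$, in which case $L$ is merely immersed rather than embedded. The plan handles this by claiming only an immersed submanifold and a local diffeomorphism. If a global statement is wanted, I would show that $\Lambda$ meets each fiber at most once on a connected piece. If $(x,z_1)$ and $(x,z_2)$ both lie in $\Lambda$, then, since $\alpha=dz-\pi^{*}\theta$ vanishes on $\Lambda$, $z_2-z_1$ equals the integral of $\theta$ along a loop in $L$. This vanishes when $\theta|_L$ is exact, for instance when $\theta|_L=du$ as in the thermodynamic examples.
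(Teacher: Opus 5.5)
Your proof is correct and follows essentially the same route as the paper: the kernel of $d\pi$ is the Reeb direction, the Legendre condition $j^{*}\alpha=0$ together with $\alpha(R)=1$ forces any tangent vector of $\Lambda$ in that kernel to vanish, and the local diffeomorphism property of $\varphi$ then follows from injectivity of its differential and $\dim\Lambda=\dim L=n$. Your closing remarks go beyond the paper's proof. The isotropy computation is the content of the paper's next proposition. The global-injectivity caveat addresses something the paper glosses over, namely calling the image of a possibly non-injective immersion an immersed submanifold. Both remarks are reasonable additions rather than a different approach.
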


\begin{proof}
In order to show that $L$ is an immersed manifold of dimension $n$, we need to show that $\Phi=\pi\circ j$ is an immersion, that is, we need to show that $T_{\lambda}\Phi:T_{\lambda}\Lambda\rightarrow T_{\Phi\left(\lambda\right)}M$ is injective. Let $v\in T_{\lambda}\Lambda$ such that $T_{\lambda}\Phi\left(v\right)=0$. By~the chain rule,
\begin{equation}
T_{\lambda}\left(\pi\circ j\right)\left(v\right)=T_{j\left(\lambda\right)}\pi\circ T_{\lambda}j\left(v\right)=0.
\end{equation}
By calling $w=T_{\lambda}j\left(v\right)$, this shows that $w$ is in the kernel of $T_{j\left(\lambda\right)}\pi$. Therefore, $w$ is vertical, that is, it is proportional to the Reeb vector $R$. Since $\Lambda$ is a Legendre submanifold, $j^{\ast}\alpha=0,$ and therefore $\alpha\left(T_{\lambda}j\cdot v\right)=0$ for all $v\in T_{\lambda}\Lambda$. In~particular, $\alpha\left(w\right)=0$. Since $w=kr,$ for $k\in\mathbb{R}$, and~$\alpha\left(R\right)=1,$ it follows that $k=0$. Since $j$ is an immersion, $T_{\lambda}j$ is injective, and~therefore $v=0$. This shows that $\Phi$ is an immersion by the Local Immersion Theorem~\cite{abraham-marsden}. Thus, $L=\Phi\left(\Lambda\right)=\pi\left(j\left(\Lambda\right)\right)=\pi\left(\Lambda\right)$ is an immersion and $\dim L=n$. In~order to show that ${\varphi}:\Lambda\rightarrow L$ is a local diffeomorphism, we need to show that for $\lambda\in\Lambda$, $T_{\lambda}{\varphi}:T_{\lambda}\Lambda\rightarrow T_{{\varphi}\left(\lambda\right)}L$ is an isomorphism. Since $\dim\Lambda=\dim L=n$, it is enough to show that $T_{\lambda}{\varphi}$ is injective. Let $v$ be in the kernel of $T_{\lambda}{\varphi}$, that is, $T_{\lambda}\rho\left(v\right)=0$. By~the chain rule,
\begin{equation}
T_{\lambda}\left(\pi\circ j\right)=T_{\lambda}\left(i\circ{\varphi}\right)=T_{{\varphi}\left(\lambda\right)}i\circ T_{\lambda}{\varphi}.
\end{equation}
Applying to $v$, and~using the fact that we have previously established that $T_{\lambda}\left(\pi\circ j\right)$ is injective, it follows that $T_{\lambda}{\varphi}$ is injective. Therefore, by~the inverse function theorem, ${\varphi}$ is a local diffeomorphism. 
\end{proof}

\begin{prop}
The manifold $L$ is Lagrangian.
\label{prop:lagrangian}
\end{prop}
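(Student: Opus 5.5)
The plan is to combine the dimension count from the preceding theorem with the relation $d\alpha=\pi^{*}\omega$. That theorem already gives $\dim L=n=\frac{1}{2}\dim M$, with $\Phi=\pi\circ j=i\circ\varphi$ an immersion and $\varphi:\Lambda\rightarrow L$ a local diffeomorphism. So it remains to show that $L$ is isotropic, i.e. $i^{*}\omega=0$.

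The key computation pulls $\omega$ back to $\Lambda$ through both sides of the commutative relation. On one side,
\begin{equation}
\varphi^{*}i^{*}\omega=\left(i\circ\varphi\right)^{*}\omega=\left(\pi\circ j\right)^{*}\omega=j^{*}\pi^{*}\omega=j^{*}d\alpha .
\end{equation}
Since pullback commutes with the exterior derivative, $j^{*}d\alpha=d\left(j^{*}\alpha\right)$. The Legendre condition $j^{*}\alpha=0$ then gives $\varphi^{*}i^{*}\omega=0$.

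To conclude $i^{*}\omega=0$, I use that $\varphi$ is a local diffeomorphism, so each $T_{\lambda}\varphi$ is a linear isomorphism and $\varphi^{*}$ is injective pointwise on forms. Concretely, take $u,w\in T_{\varphi(\lambda)}L$ and write $u=T_{\lambda}\varphi(a)$, $w=T_{\lambda}\varphi(b)$. Then
\begin{equation}
\left(i^{*}\omega\right)\left(u,w\right)=\left(\varphi^{*}i^{*}\omega\right)\left(a,b\right)=0 .
\end{equation}
Hence $L$ is an $n$-dimensional isotropic (immersed) submanifold of the $2n$-dimensional manifold $M$, that is, Lagrangian. This mirrors the argument used earlier for preimages of Lagrangian submanifolds under symplectomorphisms.

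The only delicate point is that $L$ is in general only immersed, so $i^{*}\omega$ must be read on the immersion. I handle this by working entirely with $\Phi=\pi\circ j$: the identity $\Phi^{*}\omega=d(j^{*}\alpha)=0$, together with $\Phi$ being an immersion of an $n$-manifold, already says that the image is Lagrangian at every point, with no need for $L$ to be embedded. A remark on sign conventions is also in order: the relation $\omega=-d\theta$ and $\alpha=dz-\pi^{*}\theta$ gives $d\alpha=\pi^{*}\omega$ exactly. Any other convention would change this only by a sign, which is irrelevant to the vanishing.
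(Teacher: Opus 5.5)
Your proof is correct and follows the same route as the paper: you take the dimension count from the preceding theorem, pull $\omega$ back to $\Lambda$ through $\pi\circ j=i\circ\varphi$ so that $\varphi^{*}i^{*}\omega=j^{*}d\alpha=d(j^{*}\alpha)=0$, and then use that $\varphi^{*}$ is pointwise injective because $\varphi$ is a local diffeomorphism. Your added remark about reading the isotropy condition directly on the immersion $\Phi=\pi\circ j$ is a harmless and slightly cleaner way to treat the case where $L$ is only immersed.
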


\begin{proof}
It is enough to show that $i^{\ast}\omega=0$ (isotropy), for~we have already shown that $L$ has the correct dimension, $\dim L=n$. Since $L$ is defined in terms of a projection, let us examine the Legendre submanifold $\Lambda$, where $j^{\ast}\alpha=0$. Taking the exterior derivative,
\begin{equation}
dj^{\ast}\alpha=j^{\ast}d\alpha=j^{\ast}\pi^{\ast}\omega=\left(\pi\circ j\right)^{\ast}\omega=0.
\end{equation}
Because $\pi\circ j=i\circ{\varphi}$,
\begin{equation}
\left(i\circ{\varphi}\right)^{\ast}\omega={\varphi}^{\ast}i^{\ast}\omega=0.
\end{equation}
Since ${\varphi}$ is a local diffeomorphism, $\rho^{\ast}$ is a linear isomorphism on differential forms ${\varphi}^{\ast}:\Lambda^{2}\left(T_{{\varphi}\left(\lambda\right)}L\right)\rightarrow\Lambda^{2}\left(T_{\lambda}\Lambda\right)$. Thus, $i^{\ast}\omega=0$. 
\end{proof}

\subsection*{Ideal and van der Waals~Gases}

Let us contactize the symplectic structure which describes the ideal gas. Consider $C=M\times\mathbb{R}\simeq\mathbb{R}^{5}$, in~which we added the coordinate $u$ (internal energy), and~the contact $1$-form $\alpha=du-\sum_{i=1}^{2}p_{i}dq^{i}$. The~Legendre submanifold $\Lambda$ is a submanifold of $\mathbb{R}^{5}$ of dimension $2$ which is given by $j^{\ast}\alpha=0$. In~order to obtain a representation in scale-invariant coordinates, we need to choose a generating function, like
\begin{equation}
u=f\left(q\right)=\frac{3}{2}A\left(q^{2}\right)^{-\frac{2}{3}}e^{\frac{2}{3}q^{1}}.
\end{equation}

In this case $j^{\ast}\alpha=0$ becomes
\begin{equation}
du=\sum_{i=1}^{2}p_{i}\left(q\right)dq^{i}=\sum_{i=1}^{2}\frac{\partial u}{\partial q^{i}}dq^{i},
\end{equation}
that is,
\begin{equation}
p_{2}\left(q\right)=-A\left(q^{2}\right)^{-\frac{5}{3}}e^{\frac{2}{3}q^{1}}\,,\,\,p_{1}\left(q\right)=A\left(q^{2}\right)^{-\frac{2}{3}}e^{\frac{2}{3}q^{1}}
\end{equation}
and the Legendre submanifold is given in these coordinates as
\begin{equation}
\Lambda=\left\{ \left(u,q,p\right)\in\mathbb{R}^{5},u=f\left(q\right),\,p_{i}=\frac{\partial f}{\partial q^{i}}\,,i=1,2\right\}.
\end{equation}

Since the function $u$ is smooth and globally defined, we can identify the Legendre submanifold with its projection on $\mathbb{R}^{3}$, $\left(u,q,p\right)\mapsto\left(q,u\right)$. This projection is the graph of the function $u\left(q\right)$.

The Lagrange submanifold $L$ can be obtained by the projection $\pi:C\rightarrow M$, $\pi\left(u,q,p\right)=\left(q,p\right)$ (Proposition \ref{prop:lagrangian}), which satisfies $d\alpha=\pi^{\ast}\omega$,
\begin{equation}
L=\pi\left(\Lambda\right)=\left\{ \left(q,p\right)\in M,\exists u\in\mathbb{R}|\left(u,q,p\right)\in L\right\}.
\end{equation}

In coordinates, $\omega=\sum_{i=1}^{2}dq^{i}\wedge dp_{i}$ and
\begin{align}
\left.\omega\right\vert _{L} & =\sum_{i=1}^{2}dq^{i}\wedge dp_{i}\left(q\right)\nonumber \\
 & =\left(\frac{\partial p_{2}}{\partial q^{1}}-\frac{\partial p_{1}}{\partial q^{2}}\right)dq^{2}\wedge dq^{1}\equiv0.
\end{align}

Therefore, we can understand the Lagrange submanifold as the description of the thermodynamic system at a certain height $u$. The~contactization lifts the Lagrange submanifold to the Legendre submanifold; it is the thermodynamic description which contains the fundamental relation. In~the case of the ideal gas, since the Legendre submanifold $\Lambda$ is globally generated by a function (the internal energy $u\left(q\right)$), both $\Lambda$ and $L$ can be globally parametrized by the base coordinates $q$. Let $Q$ be the configuration space in coordinates $q$. A~parametrization for $\Lambda$ is $i_{\Lambda}:Q\rightarrow C$ given by
\begin{equation}
i_{\Lambda}\left(q\right)=\left(q,p,u\right)=\left({q^1,q^2},p_{1}=\frac{2}{3}u,p_{2}=-\frac{2}{3}\frac{u}{q^{2}},u\right)
\end{equation}
in which $u=u\left(q\right)$ is the internal energy function. A~parametrization for $L$ is $i_{L}:Q\rightarrow M$ given by 
\begin{equation}
i_{L}\left(q\right)=\left(q,p\right)=\left({q^1,q^2},p_{1}=\frac{2}{3}u,p_{2}=-\frac{2}{3}\frac{u}{q^{2}}\right).
\end{equation}

Both define regular parametrizations (of maximum rank). The~diffeomorphism ${\varphi}$ is global and it is simply given by
\begin{equation}
{\varphi}\left(q,p,u\right)=\left(q,p\right)\,,\,\,{\varphi}^{-1}\left(q,p\right)=\left({q^1,q^2},p,u=\frac{3}{2}p_{1}\right).
\end{equation}

\section{Symplectization}
\label{sec:Symplectization}

In this section, we provide another perspective in which symplectization is understood as a formal procedure for providing a thermodynamic system given in terms of scale-invariant quantities with a description in terms of homogeneous quantities of arbitrary degree. Consider a contact manifold $\left(C,\alpha\right)$ of dimension $2n+1$, with~contact form $\alpha$ given in local Darboux coordinates as
\begin{equation}
\alpha=du-\sum_{i=1}^{n}p_{i}dq^{i}.\label{eq:contact-form}
\end{equation}

Let $\Lambda$ be the Legendre submanifold given by the set of equations
\begin{equation}
u=u\left(q\right),\,\,p_{i}=\frac{\partial u}{\partial q^{i}}\,,\,\,{u\left(\lambda^{r_q} q\right)=u\left(q\right)},
\end{equation}
that is $u\left(q\right)$ is homogeneous of degree $0$. Let $\left(M,\omega\right)$ be a symplectic manifold given as \mbox{$M={C}\times\mathbb{R}$}, where the symplectic form is $\omega=d\left(e^{rt}\alpha\right)$. Now consider the symplectomorphism $\psi:\left(t,u,q^{i},p_{i}\right)\mapsto\left(Z,\mu,Q^{i},P_{i}\right)$,
\begin{equation}
Z=e^{\rho t}\,,\,\,\mu=\rho^{-1}e^{\left(r-\rho\right)t}\left(ru-\sum_{i=1}^{n}r_{i}p_{i}q^{i}\right)\,,\,\,\,\,Q^{i}=e^{r_{i}t}q^{i}\,,\,\,P_{i}=e^{\left(r-r_{i}\right)t}p_{i},\label{eq:transf-can}
\end{equation}
such that $\omega=\psi^{\ast}\omega^{\prime}$. The~symplectic form in the new coordinates is canonical
\begin{equation}
\omega^{\prime}=dZ\wedge d\mu+\sum_{i=1}^ndQ^{i}\wedge dP_{i}.
\end{equation}

Here $Z=e^{\rho t}$ is a scale thermodynamic quantity analogous to $N$ in (\ref{N}), and~$\mu$ is therefore the analog of the chemical potential. In~terms of the canonical one forms $\omega=-d\theta$ and $\omega^{\prime}=-d\theta^{\prime}$, one finds $\psi^{\ast}\theta^{\prime}-\theta=dU$, where $U=e^{rt}u$. In~terms of the new variables, the~generating function $U$ satisfies
\begin{equation}
rU=\rho\mu Z+\sum_{i=1}^nr_{i}P_{i}Q^{i}.\label{eq:euler-relation}
\end{equation}

In the Lagrangian submanifold, the~generating function $U$ satisfies the Euler relation. In~fact, the~new internal energy $U\left(Q,Z\right)$ is homogeneous of degree $r$, when the coordinates $Q^{i}$ are scaled with weight $r_{i}$, while $Z$ is scaled with degree $\rho$, since
\begin{equation}
U\left(Q^{i},Z\right)=Z^{r/\rho}u\left(\frac{Q^{i}}{Z^{r_{i}/\rho}}\right),\label{energy}
\end{equation}
implies
\begin{equation}
U\left(\lambda^{r_{i}}Q^{i},\lambda^{\rho}Z\right)=\lambda^{r}U\left(Q,Z\right).
\end{equation}

\subsection{Gauge~Fixing}

The symmetry generator for the dilations
\begin{equation}
Q^{i}\mapsto\lambda^{r_{i}}Q^{i}\,,\,\,Z\mapsto\lambda^{\rho}Z\,,\,\,P_{i}\mapsto\lambda^{r-r_{i}}P_{i}\,,\,\,\mu\mapsto\lambda^{r-\rho}\mu\label{eq:dilations}
\end{equation}
is the ``canonical conformal field''
\begin{equation}
X_{r}=\sum_{i=1}^nr_{i}Q^{i}\frac{\partial}{\partial Q^{i}}+\rho Z\frac{\partial}{\partial Z}+\sum_{i=1}^n\left(r-r_{i}\right)P_{i}\frac{\partial}{\partial P_{i}}+\left(r-\rho\right)\mu\frac{\partial}{\partial\mu},\label{eq:liouville}
\end{equation}
since $\mathcal{L}_{X_{r}}\theta^{\prime}=r\theta$, or~equivalently, $\mathcal{L}_{X_{r}}\omega^{\prime}=r\omega^{\prime}$. The~hypersurfaces $\Sigma_{0}$ of constant $Z_{0}>0$, given by the constraints $\Phi=Z-Z_{0}=0$, are transverse to the flux of $X_{r}$, i.e.,~\mbox{$\left.X_{r}\left(\Phi\right)\right|_{\Sigma_{0}}=\rho Z_{0}$}, such that each orbit of $X_{r}$ intercepts the gauge fixing surface $\Sigma_{0}$ only once. Thus, the~coordinate $Z=e^{\rho t}$ acts as a global scale of the system, and~by gauge fixing $Z=Z_{0}$ one is taking the {quotient} of the symplectic manifold by orbits of $X_{r}$ to obtain a description in terms of scale-invariant quantities, which is effectively the initial contact description. This can be seen explicitly by noting that the pullback of the one-form $i_{X_{r}}\omega^{\prime}$ to $\Sigma_{0}$ is $\left.\alpha\right|_{\Sigma_{0}}=rZ_{0}^{r/\rho}\alpha$. One can also regard the symplectic manifold $M$ as a fiber bundle over the contact manifold $C$, with~the structure group being the group of dilations $\mathbb{R}_{>0}$, which acts on $M$ through the flow of $X_{r}$. 

{To formalize this gauge-theoretic interpretation, it is crucial to clarify the nature of the physical redundancy and the topology of the resulting quotient space. Physically, the~redundant degree of freedom is the absolute global scale (i.e., the~total size) of the system. Because~equilibrium thermodynamic potentials obey strict Euler homogeneity, scaling the extensive dimensions of the system does not alter its intrinsic equations of state. Consequently, any two points lying on the same one-dimensional orbit generated by the conformal field $X_r$ describe the exact same thermodynamic state. This physical equivalence justifies interpreting the scaling orbits as gauge-equivalent states. Furthermore, the~mathematical process of taking the quotient $M/\mathbb{R}_{>0}$ presents no global topological obstructions. Because~the structure group of dilations, $\mathbb{R}_{>0}$, is contractible, the~principal fiber bundle over the contact manifold $C$ is topologically trivial. Therefore, the~gauge-fixing condition $Z = Z_0 > 0$ defines a smooth, globally well-defined section that embeds $C$ back into $M$ without singularities or global obstructions.}

Let the continuous scale parameter be $\lambda=e^{s}>0$. The~action of this dilation group on $M$ is given by the dilations (\ref{eq:dilations}), which scale the 
coordinates. Under~this action, the~orbits of $X_{r}$ correspond exactly to the one-dimensional curves generated by varying $\lambda$. The~quotient space $M/\mathbb{R}_{>0}$ is diffeomorphic to $C$, since the smooth projection $\pi:M\rightarrow C$ is well defined, given that the functions $u\left(Q,P,Z,\mu\right)$, $q^{i}\left(Q,Z\right)$ and $p_{i}\left(P,Z\right)$ are invariant under dilations. Consequently, the~projection map $\pi$ collapses each dilation orbit in $M$ into a single, unique thermodynamic state in $C$. Within~this fiber bundle framework, the~gauge fixing condition $Z=Z_{0}$ acts as the choice of a global section, embedding $C$ back into $M$. Furthermore, the~Noether charge~\cite{abraham-marsden,souriau} is given by $G=i_{X_{r}}\theta^{\prime}$, which evaluates to
\begin{equation}
G=\rho\mu Z+\sum_{i=1}^{n}r_{i}P_{i}Q^{i}.
\end{equation}

When restricted to the Lagrange submanifold $L$, this charge coincides with the scaled energy function, $\left.G\right\vert _{L}=rU\left(Q,Z\right)$. The~Hamiltonian vector field associated to this function~is
\begin{equation}
X_{G}=\sum_{i=1}^nr_{i}\left(Q^{i}\frac{\partial}{\partial Q^{i}}-P_{i}\frac{\partial}{\partial P_{i}}\right)+\rho\left(Z\frac{\partial}{\partial Z}-\mu\frac{\partial}{\partial\mu}\right),
\end{equation}
since $i_{X_{G}}\omega^{\prime}=-dG$ and $\mathcal{L}_{X_{G}}\omega^{\prime}=0$. Unlike the conformal field $X_{r}$, the~Hamiltonian vector field $X_{G}$ strictly preserves the symplectic volume. Physically, its integral flow represents a canonical scale transformation that inversely scales extensive and intensive variables, maintaining the total energy invariant. It is precisely this property that allows us to treat scale-invariant quantities as invariants under dilations. In~fact, the~scale-invariant quantities $q^{i}=Z^{-r_{i}/\rho}Q^{i}$ and $p_{i}=Z^{\left(r_{i}-r\right)/\rho}P_{i}$ commute with $G$, $\left\{ q^{i},G\right\} =\left\{ p_{i},G\right\} =0$. To~describe thermodynamic processes that preserve the equilibrium surface while respecting the scaling symmetries, we introduce a Hamiltonian function $H$ which commutes with $G$, vanishes on $L$ and is homogeneous, $X_{r}\left(H\right)=rH$. A~natural choice for such a function focusing on the scaling variable $Z$ is
\begin{equation}
H=Z^{1-r/\rho}\rho\left(\mu\left.G\right|_{L}-\frac{\partial U}{\partial Z}G\right).
\label{H}\end{equation}

{
It is the simplest, strictly linear combination of the intrinsic thermodynamic constraints related to the scaling variables. The~commutator of $K=\lambda_{1}\left(\mu-\frac{\partial U}{\partial Z}\right)+\lambda_{2}\left(G-\left.G\right|_{L}\right)$ with G closes in the Poisson algebra iff $\lambda_{1}=\left.G\right|_{L}$ and $\lambda_{2}=-\frac{\partial U}{\partial Z}$. For~these values of $\lambda_{1}$ and $\lambda_{2}$, one has $K=\mu\left.G\right|_{L}-\frac{\partial U}{\partial Z}G$. For~the final Hamiltonian $H$ to commute with $G$, we multiply $K$ by an integrating factor $g\left(Z\right)$, $H=g\left(Z\right)K$. Then, $\left\{ H,G\right\} =0$ implies $g\left(Z\right)\sim Z^{1-r/\rho}$. The~final expression is $H=\rho Z^{1-r/\rho}K$ where $\rho$  is a normalization constant.}

{The expression for $H$ is not unique, given the starting ansatz. For~an alternative Hamiltonian of the form $\tilde{H} = f \cdot H$ to remain valid, the~modifying factor $f$ must satisfy two strict geometric conditions: it must be homogeneous of degree zero to preserve the extended scaling symmetry ($X_r(f) = 0$), and~it must strongly commute with the dilation generator ($\{f, G\} = X_G(f) = 0$). Consequently, $f$ must be annihilated by the difference of these two  differential operators, $(X_r - X_G) = \sum p_i \frac{\partial}{\partial p_i}$. By~Euler's theorem, this severely restricts $f$ to be a scale-invariant function constructed exclusively from non-linear products and ratios of variables whose net scaling degree exactly cancels out (for instance, a~dimensionless parameter like $q^i q^j / (q^k)^2$ if $r_i + r_j = 2r_k$). While mathematically allowed, placing such  specific fractional parameters into the  Hamiltonian would arbitrarily bias the dynamics. Thus, setting $f=1$ emerges as the  natural choice, preserving the unbiased physical geometry through the simplest linear combination of the thermodynamic scaling~constraints.}

Its Hamiltonian flow generates scale-invariant quasi-static processes along the equilibrium manifold.
Since by construction $H$ commutes with the scaling generator $G$, its Hamiltonian vector field $X_{H}$ is tangent to the level surfaces of $G$ and its flow commutes with the dilations generated by $X_{r}$. Since $i_{\left[X_{r},X_{H}\right]}\omega^{\prime}=\left[\mathcal{L}_{X_{r}},i_{X_{H}}\right]\omega^{\prime}$ and $\mathcal{L}_{X_{r}}\omega^{\prime}=r\omega^{\prime}$, one~has
\begin{align}
i_{\left[X_{r},X_{H}\right]}\omega^{\prime} & =\mathcal{L}_{X_{r}}i_{X_{H}}\omega^{\prime}-ri_{X_{H}}\omega^{\prime}\nonumber \\
 & =-d\left(X_{r}H\right)+rdH,
\end{align}
where we used the definition of $X_{H}$, $i_{X_{H}}\omega^{\prime}=-dH$. By~homogeneity of $H$, the~right-hand side vanishes, so $\left[X_{r},X_{H}\right]=0$ follows from the non-degeneracy of $\omega^{\prime}$. Consequently, the~symplectic flow of $X_{H}$ on the extended manifold $M$ projects onto the underlying contact manifold $C$ once the gauge is fixed. This projection reduces $X_{H}$ to a contact vector field $X_{h}$ on $C$, governed by a contact Hamiltonian {$h$} that is induced by the restriction of $H$ to the gauge slice. {It is straightforward to obtain the contact Hamiltonian $h$ by restricting $H$ from \eqref{H} to the gauge slice $Z=1$, which amounts to making the symplectization parameter $t=0$ in all expressions. As~a result, one finds that $h$ is identically zero on the contact manifold. In~other words, the~contact dynamics is trivial, as~one should expect for the scale flow in the scale-invariant thermodynamic description given by the contact~geometry. 



{Finally, we take the opportunity to distinguish the three main dynamical concepts we have introduced in this work: (i) gauge orbits, (ii) geometric flows, and~(iii) physical thermodynamic processes. Gauge orbits are generated by the canonical conformal field $X_r$. They represent continuous dilations of the extended coordinates and do not correspond to thermodynamic state changes. Instead, imposing a gauge-fixing condition (such as $Z=Z_0$) effectively takes the quotient of the symplectic manifold by these orbits, collapsing them to recover the physical scale-invariant contact geometry. Geometric flows are associated with the Hamiltonian vector field $X_G$ of the Noether charge $G$. This flow strictly preserves the symplectic volume and performs canonical scale transformations that inversely scale extensive and intensive variables to maintain the total energy invariant. It acts as a mathematical mechanism that ensures the proper scaling symmetries within the phase space, rather than representing a physical time evolution. The~physical thermodynamic processes are driven by the flow of $X_H$ in the symplectic space, which reduces to the contact vector field $X_h$ upon gauge fixing.}

\subsection{Schwarzschild Black~Hole}

Let us consider a case in which the thermodynamic description has only one independent extensive variable. By~extensive, we mean that the internal energy is homogeneous of degree $r$ and that there is a single coordinate that scales with weight $r_{q}$, for~arbitrary positive $r$ and $r_{q}$.~This simplest case illustrates all the previous developments and reveals some peculiar features. Starting from the intensive description, where all functions are {densities, 
 let us consider a three-dimensional contact manifold and contact form given in local Darboux coordinates as
\begin{equation}
\alpha=du-pdq.
\end{equation}

Let us assume that $q>0$. Since the thermodynamic equilibrium states are described by a homogeneous function $u\left(q\right)$ of degree zero, $u\left(\lambda^{r}q\right)=u\left(q\right)$ for $r>0$, then, by~the Euler theorem, $\frac{du}{dq}\left(q\right)=0$, that is, $u\left(q\right)$ is constant. Therefore, it also follows that $p=du/dq=0$ in the Legendre submanifold described by the potential $u\left(q\right)$,
\begin{equation}
u\left(q\right)={c}=\mathrm{const}.~,\ p\left(q\right)=0.
\end{equation}

Symplectization in this case produces the symplectic manifold $M\simeq\mathbb{R}^{4}$ with symplectic form $\omega=\left(e^{rt}\alpha\right)$. In~the coordinates given by the symplectomorphism (\ref{eq:transf-can}) we~have
\begin{equation}
Z=e^{\rho t}\,,\,\,\mu=\rho^{-1}e^{\left(r-\rho\right)t}\left(ru-r_{q}pq\right)\,,\,\,\,\,Q=e^{r_{q}t}q\,,\,\,P=e^{\left(r-r_{q}\right)t}p.
\end{equation}

Then the symplectic form is given by
\begin{equation}
\omega=dZ\wedge d\mu+dQ\wedge dP.
\end{equation}

The new internal energy is given by (\ref{energy}),
\begin{equation}
U=e^{rt}u={c}Z^{r/\rho}
\end{equation}
and scales as
\begin{equation}
U\left(\lambda^{\rho}Z\right)=\lambda^{r}U\left(Z\right).
\end{equation}

One also has from (\ref{eq:euler-relation})
\begin{equation}
rU=\rho Z\mu+r_{q}PQ\,.
\end{equation}

The Lagrange submanifold in the new coordinates is given by
\begin{equation}
P=0\,,\,\,\mu\left(Z\right)={c}\frac{r}{\rho}Z^{r/\rho-1},\,\,Q=Z^{r_{q}/\rho}q\,,\,\,rU=\rho Z\mu\,,\,\,dU=\mu dZ.
\end{equation}

We observe that, in~the case of a three-dimensional contact manifold, the~transformed variable $Q$ does not participate in the extensive description, since the internal energy $U$ is a function only of $Z$. So even though the symplectic description is four dimensional, and~uses the internal variables $Q$ and $P$, they do not appear in the physical description of the system, which is entirely described by the function $U\left(Z\right)$. Since this is a thermodynamic theory, we must have an associated temperature. Therefore, we assume that, as~an equation of state, $\mu\left(Z\right)$ must be identified with a temperature $T$ and the new variable $Z$ with the extensive entropy $S$ of the system. For~the equilibrium states,
\begin{equation}
dU=TdS~,\ rU=\rho TS~,\ T={c}\frac{r}{\rho}S^{r/\rho-1}.
\end{equation}

Note that for $r=\rho$, the~temperature $T$ is constant:
\begin{equation}
T\left(S\right)={c}\label{tcon}
\end{equation}
and we can say that there is no {change of state} in the thermodynamic description, since only isothermal processes are allowed. From~a physical point of view, this example corresponds to the unusual case where the heat $\delta=TdS$ is an exact 1-form. That is, a~thermodynamics that admits the existence of caloric. In~a usual thermodynamic system, where the entropy scales as the energy ($r=\rho$), the~above result (\ref{tcon}) shows that such a system only admits isothermal process. However, a~non-frozen thermodynamic system can be obtained if one {changes the energy and entropy scaling weights}. Probably the most well-known examples are given by black hole thermodynamics~\cite{baldiotti2017,dolan2011}. In~this case, the internal energy is associated with the Misner--Sharp mass~\cite{misner} and the temperature with the surface gravity of the event horizon. Therefore, {$\rho-r$} represents, geometrically, the~difference between the volume and area dimensions. That is, for~arbitrary $D+1$ dimensions spacetime,  {$\rho-r=1$}. For~the Schwarzschild black hole in $3+1$ dimensions, geometric arguments show the entropy varies proportionally to the square of the mass, i.e.,~$r=1$ and $\rho=2$,
\begin{equation}
U\left(S\right)={c}S^{1/2}~,\ U\left(S\right)=2TS.\label{smarr}
\end{equation}

In this geometric development, {the Euler relation expressed in} the second equation above follows from a Komar integral and is known as Smarr's formula~\cite{campos}. We speculate that other thermodynamic descriptions, probably associated with black holes in fractal geometries, may be constructed by considering other values of  {$\rho-r$}. For~example, considering the Schwarzschild black hole with the fractal horizon described in~\cite{gashti,Barrow}, {we have $\rho-r=\Delta$, $0\leq \Delta \leq 1$, and} the procedure leading to (\ref{smarr}) yields:
\begin{equation}
\left(D-3\right)U=\left(D-2-\Delta\right)TS~,\ 0\leq\Delta\leq1,\label{gsr}
\end{equation}
with $\Delta=0$ for a usual smooth spacetime structure and $\Delta=1$ for the maximum intricacy ``roughest'' black hole surface. In~this last case, the~horizon area behaves as a volume from an information content perspective. The~new expression (\ref{gsr}) represents a generalized Smarr relation that is hard to be obtained by using a Komar integral,~{since the original formulation of the Komar integral requires concepts of smooth differential geometry that conflict directly with the discontinuous and non-differentiable nature of fractals. The~quantity
$\Delta$ can affect thermodynamic quantities as the specific heat and alter physical characteristics of the black hole. However, such analysis falls outside the scope of the present work.} The above construction demonstrates that, in~addition to geometric arguments, there is also a fundamental reason why the entropy of a {Schwarzschild} black hole cannot be proportional to its internal energy. Otherwise, the~thermodynamic description cannot reproduce the expected physical behavior. This is because a Schwarzschild black hole with a constant temperature cannot describe important phenomena such as, for~example, Hawking evaporation. {We stress that this change in weights is mandatory only for the Schwarzschild black hole. If~the black hole has additional independent thermodynamic parameters (e.g., charge or angular momentum),~it is always possible to redefine the entropy such that it scales with the mass parameter. In~the symplectic extended space, the~pair $\left(Q,P\right)$ encodes the internal relative degrees of freedom, that is, the~``shape'' or ``composition'' of the system independent of its global scale Z. Because~a Schwarzschild black hole is uniquely characterized by a single macroscopic extensive variable (its entropy or area), it possesses no internal shape parameters or relative fractions that can be independently excited. Consequently, the~internal energy cannot depend on Q, which rigorously forces its conjugate generalized force to vanish ($P=\partial U/\partial Q=0$). This zero-momentum constraint reflects the physical impossibility of altering the internal thermodynamic state of a static, uncharged black hole without changing its global area.~For a Kerr or Reissner--Nordström black hole, however,~the~$\left(Q,P\right)$ sector would immediately cease to be trivial, physically mapping to the observable rotational or electromagnetic hair (e.g.,~the~dimensionless spin parameter and its conjugate angular velocity). Thus, while the decoupling of $\left(Q,P\right)$ in our model might initially appear as a formal mathematical artifact of the symplectization process, it can be seen as a geometric manifestation of the No-Hair~Theorem.}

\section{Conclusions}

In this work,~we developed a connection between the constrained Hamiltonian formalism with symplectic and contact geometries for the description of thermodynamics. Through the well-established process of contactization, we reviewed how thermodynamic equilibrium states, modeled as Lagrangian submanifolds, map naturally to Legendre submanifolds. These foundations were illustrated by explicitly demonstrating the diffeomorphism between the Lagrangian submanifolds of the ideal and van der Waals~gases. 

The primary contribution of this paper, however, lies in our symplectic approach to describing thermodynamic systems with general homogeneity. By~employing the process of symplectization, we expanded the thermodynamic phase space to include a global scale variable, which allowed us to formalize the dynamics of systems whose extensive and intensive quantities scale with arbitrary~weights. 

{To ground this formalism physically, it is important to clarify the interpretation of the global scale variable $Z$ introduced during the symplectization process. Physically, this variable corresponds to the absolute macroscopic size, or~total extensive magnitude, of~the thermodynamic system (analogous to the total number of particles, $N$, or~the total volume, $V$, depending on the chosen representation). In~standard thermodynamic equilibrium, the~intrinsic state of a homogeneous system is completely determined by scale-invariant intensive variables (such as temperature and pressure). These intensive variables are independent of the system's total size due to the strict Euler homogeneity of thermodynamic~potentials.}

{Consequently, $Z$ acts purely as a global extensive multiplier. Moving along the geometric flow generated by the scaling symmetry mathematically represents a continuous global dilation of the system, physically akin to scaling up a system by adding more of the identical substance. This operation scales all extensive quantities proportionally but leaves the intensive equilibrium state entirely invariant. Therefore, imposing a gauge-fixing condition (e.g., $Z = Z_0$) is the geometric equivalent of choosing an arbitrary macroscopic reference size (such as exactly one mole or one liter) to isolate and study the intrinsic, scale-invariant equations of state.}

We demonstrated that imposing a gauge-fixing condition on this continuous scale parameter---
by quotienting the symplectic manifold by the orbits of the associated dilation generator---we successfully recover the physical description of scale-invariant quantities on the underlying contact manifold. The~physical necessity of this generalized homogeneity framework is highlighted by our analysis of the Schwarzschild black hole. Under~our approach, we showed that  {the internal energy and entropy scaling weights must be changed} to permit non-isothermal~processes.

This framework for handling arbitrary scaling degrees provides an appropriate geometric basis for describing dynamic thermodynamic phenomena in non-standard systems, such as Hawking~evaporation.

A natural direction for future research lies in extending this geometric framework to the study of phase transitions and critical phenomena. As~a thermodynamic system approaches a continuous phase transition, the~standard Euler homogeneity of its thermodynamic potentials breaks down, giving way to generalized homogeneous functions, as described by the Widom scaling hypothesis~\cite{widom}. Because~our symplectization procedure explicitly incorporates arbitrary scaling weights through the extended global scale variable and the associated dilation generator, it provides a mathematical framework for modeling these critical scaling laws. Future work will investigate how the scale-invariant dynamics and gauge-fixing procedures developed here can be used to geometrically characterize critical points, exploring the signatures of phase transitions in both standard macroscopic systems and the thermodynamics of black holes~\cite{hawking}. {Finally, as~in classical mechanics, the~main application of the Dirac theory of constraints with respect to gauge theories is related to the process of quantization. Quantization of thermodynamics is briefly discussed in~\cite{brav1}, and~a canonical quantization was recently proposed in~\cite{baldiotti2026}. A~possible continuation of this work would be to consider the canonical quantization of the symplectic~approach.}


\vspace{6pt} 










\begin{thebibliography}{99}
	
\bibitem{brav1}
	Bravetti, A.  Contact geometry and thermodynamics. \emph{Int. J. Geom. Methods Mod. Phys.} \textbf{2019}, \emph{16}, 1940003.
	
	
\bibitem{baldiotti2016}
	Baldiotti, M.C.; Fresneda, R.; Molina, C. A Hamiltonian approach to Thermodynamics. \emph{Ann. Phys.} \textbf{2016}, \emph{373}, 245--256.
	
	
\bibitem{baldiotti2017}
	Baldiotti, M.C.; Fresneda, R.; Molina, C. A Hamiltonian approach for the Thermodynamics of AdS black holes. \emph{Ann. Phys.} \textbf{2017}, \emph{382}, 22--45.
	
	
\bibitem{balian2001}
	Balian, R.; Valentin, P. Hamiltonian structure of thermodynamics with gauge. \emph{Eur. Phys. J. B} \textbf{2001}, \emph{21}, 269--282.
	
	
\bibitem{bravetti2015}
	Bravetti, A.; Lopez-Monsalvo, C.S.; Nettel, F. Contact symmetries and Hamiltonian thermodynamics. \emph{Ann. Phys.} \textbf{2015}, \emph{361}, 377--400.
	
	
\bibitem{bravetti2015b}
	Bravetti, A.; Lopez-Monsalvo, C.; Nettel, F. Conformal Gauge Transformations in Thermodynamics. \emph{Entropy} \textbf{2015}, \emph{17}, 6150--6168.
	
	
\bibitem{quevedo2022}
	Aragon-Munoz, L.; Quevedo, H. Symplectic structure of equilibrium thermodynamics. \emph{Int. J. Geom. Methods Mod. Phys.} \textbf{2022}, \emph{19},~2250178.
	
	
\bibitem{bravetti2023}
	Bravetti, A.; Jackman, C.; Sloan, D. Scaling symmetries, contact reduction and Poincar{\'e}'s dream. \emph{J. Phys. A Math. Theor.} \textbf{2023}, \emph{56},~435203.
	
	
\bibitem{gosh2026}
	Ghosh, A.; Harikumar, E. Hamiltonian Thermodynamics on Symplectic Manifolds. \emph{Int. J. Theor. Phys.} \textbf{2026}, \emph{65}, 134.
	
	
\bibitem{abraham-marsden}
	Abraham, R.; Marsden, J. \emph{Foundations of Mechanics}, 2nd ed.; Benjamin/Cummings Publishing Company: Reading, MA, USA, 1978.
	
	
\bibitem{weinstein}
	Weinstein, A. Symplectic manifolds and their lagrangian submanifolds. \emph{Adv. Math.} \textbf{1971}, \emph{6}, 329--346.
	
	
\bibitem{cannas}
	Da Silva, A.C. \emph{Lectures on Symplectic Geometry}; Springer: Berlin/Heidelberg, Germany, 2008.
	
	
\bibitem{souriau}
	Souriau, J.-M.; Cushman-de-Vries, C.H.; Cushman, R.H.; Tuynman, G.M. \emph{Structure of Dynamical Systems: A Symplectic View of Physics}; Birkh{\"a}user: Boston, MA, USA, 1997.
	
	
\bibitem{dolan2011}
	Dolan, B.P. The cosmological constant and black-hole thermodynamic potentials. \emph{Class. Quantum Gravity} \textbf{2011}, \emph{28}, 125020.
	
	
\bibitem{misner}
	Misner, C.W.; Sharp, D.H. Relativistic equations for adiabatic, spherically symmetric gravitational collapse. \emph{Phys. Rev.} \textbf{164}, \emph{136}, B571--B576.
	
	
\bibitem{campos}
	Campos, T.L.; Baldiotti, M.C.; Molina, C. Generating Kerr-anti-de Sitter thermodynamics. \emph{Phys. Rev. D} \textbf{2024}, \emph{110}, 024049.
	
	
\bibitem{gashti}
	Gashti, S.N.; Pourhassan, B.; Sakall{\i}, {\.I}. Thermodynamic topology and phase space analysis of AdS black holes through non-extensive entropy perspectives. \emph{Eur. Phys. J. C} \textbf{2025}, \emph{85}, 305.
	
	
\bibitem{Barrow}
	Barrow, J.D. The Area of a Rough Black Hole. \emph{Phys. Lett. B} \textbf{2020}, \emph{808}, 135643.
	
	
\bibitem{widom}
	Widom, B. Equation of State in the Neighborhood of the Critical Point. \emph{J. Chem. Phys.} \textbf{1965}, \emph{43}, 3898--3905.
	
	
\bibitem{hawking}
	Hawking, S.W.; Page, D.N. Thermodynamics of black holes in anti-de Sitter space. \emph{Commun. Math. Phys.} \textbf{1983}, \emph{87}, 577--588.
	
	
\bibitem{baldiotti2026}
	Santos, L.F.; Ramos, V.H.M.; Cius, D.; Baldiotti, M.C.; Amaral, B. Canonical quantization for Equilibrium Thermodynamics. \emph{Phys. Rev. E} \textbf{2026}, \emph{113}, 05413.
	

\end{thebibliography}
\end{document}